\documentclass[12pt]{amsart}
\usepackage{hyperref, amsmath,amssymb,amsfonts,amsbsy,mathtools,cite,setspace}
\usepackage{graphicx}
\usepackage{amssymb}
\usepackage{amsthm}
\usepackage{enumerate}
\usepackage{wasysym}
\usepackage{cite}
\usepackage{pgf,tikz}
\usepackage{mathrsfs}
\usetikzlibrary{arrows}
\usepackage{mathtools}

\usepackage{braket}
\usepackage{color}
\usepackage{float}

\linespread{1}

\def\ket#1{| #1 \rangle}
\def\bra#1{\langle #1 |}
\def\kb#1#2{|#1\rangle\!\langle #2 |}
\def\bk#1#2{\langle #1 |#2\rangle}

\def\be{\begin{eqnarray}}
\def\ee{\end{eqnarray}}
\def\bee{\begin{eqnarray*}}
\def\eee{\end{eqnarray*}}


\newtheorem{defn}{Definition}
\newtheorem{prop}{Proposition}
\newtheorem{thm}{Theorem}
\newtheorem{exa}{Example}
\newtheorem{lem}{Lemma}

\newcommand{\C}{{\mathbb C}}
\newcommand{\Z}{{\mathbb Z}}

\renewcommand{\H}{{\mathcal H}}

\newcommand{\operp}{$\bigcirc$\kern-.91em{$\perp$}}

\newcommand{\spn}{\operatorname{span}}

\def\mvr{{\rm mvr}}

\def\be{\begin{eqnarray}}
\def\ee{\end{eqnarray}}
\def\bee{\begin{eqnarray*}}
\def\eee{\end{eqnarray*}}
\def\ot{\otimes}

\begin{document}

\title[Vector Representations of Graphs and One-Way LOCC]{Vector Representations of Graphs and Distinguishing Quantum Product States with One-Way LOCC}
\author[D.W.Kribs, C.Mintah, M.Nathanson, R.Pereira]{David W. Kribs$^{1,2}$, Comfort Mintah$^{1}$, Michael Nathanson$^3$, Rajesh Pereira$^{1}$}

\address{$^1$Department of Mathematics \& Statistics, University of Guelph, Guelph, ON, Canada N1G 2W1}
\address{$^2$Institute for Quantum Computing and Department of Physics \& Astronomy, University of Waterloo, Waterloo, ON, Canada N2L 3G1}
\address{$^3$Department of Mathematics and Computer Science, Saint Mary's College of California, Moraga, CA, USA 94556}

\begin{abstract}
Distinguishing sets of quantum states shared by two parties using only local operations and classical communication measurements is a fundamental topic in quantum communication and quantum information theory. We introduce a graph-theoretic approach, based on the theory of vector representations of graphs, to the core problem of distinguishing product states with one-way LOCC. We establish a number of results that show how distinguishing such states can be framed in terms of properties of the underlying graphs associated with a set of vector product states. We also present a number of illustrative examples.
\end{abstract}

\subjclass[2010]{47L90, 46B28, 81P15, 81P45, 81R15}

\keywords{quantum communication, quantum states, product states, local operations and classical communication, simple graph, graph clique cover, chordal graph, domino states.}


\maketitle

\section{Introduction}

In quantum information theory, we frequently attempt to recover classical information that has been encoded into quantum states. If our quantum system consists of multiple physical subsystems, we encounter instances where the information can be recovered with joint measurements on the subsystems but not with local measurements \cite{bennett1999quantum,ghosh2004distinguishability,horodecki2003local,chefles2004condition}. This paradigm makes use of local quantum operations and classical communication (LOCC) and includes many topics such as quantum teleportation and data hiding \cite{Teleportation, terhal2001hiding,eggeling2002hiding}. There is also a growing body of work on the more restricted problem of one-way LOCC, in which parties must perform their measurements in a prescribed order \cite{Walgate-2000,Nathanson-2005,fan2004distinguishability,N13,cosentino2013small,yu2012four,kribs2017operator, kribsquantum2019,lattice2019}.

The corresponding linear algebra problem involves attempting to identify an unknown vector $\ket{\psi}$ from an orthonormal set of vectors $\{ \ket{\psi_k}\}$ in a composite (tensor product) Hilbert space $\H = \H_A \ot \H_B$. In the current discussion, we assume that $\H_A$ and $\H_B$ are finite-dimensional complex inner product spaces; and the initial measurement is a set of rank one operators that sum to the identity operator on $\H_A$. The exploration of  connections with linear algebra goes back to the origins of quantum information theory, and all of the citations above can be seen in this light. Our recent work to relate one-way LOCC to operator systems and algebras \cite{kribs2017operator,kribsquantum2019,lattice2019} is a continuation of this exploration.

One of the most unexpected phenomena in quantum information is that of ``nonlocality without entanglement,'' originally identified by Bennett et al. \cite{bennett1999quantum}. It says that a set of states $\{ \ket{\psi_k}\}$ in $\H_A \ot \H_B$ can fail to be locally distinguishable even if each of them is a product state, i.e.  for each $k$, we have $\ket{\psi_k} =  \ket{\psi_k^A}\ot \ket{\psi_k^B} \in \H_A \ot \H_B$. This implies that quantum entanglement is not the only way to take advantage of the peculiarities of quantum information.  For any set of product states, we can ask whether it exhibits this phenomenon. The local relationships between product states are modeled using the confusability graph (as identified in \cite{duan2013zero}), which  arises in the study of vector representations of graphs  \cite{lovasz1989orthogonal,lovasz2000correction,fallat2007minimum,booth2008minimum,fallat2011variants,barioli2010zero,booth2011minimum,barioli2011minimum }. 

In this paper, we apply graph-theoretic techniques, including those associated with vector representations of graphs, to the study of LOCC quantum state distinguishability. We specifically focus on the core problem of one-way LOCC distinguishability of product states, identifying and clarifying new structure for the distinguishability of such sets of states based on associated vector graph representations.

The paper is organized as follows. In the next section we present requisite preliminaries from graph theory, along with the vector representations of graphs that arise from sets of quantum product states and an illustrative one-way LOCC example. We then initiate our analysis in Section~3, establishing that one-way distinguishability is equivalent to the existence of a graph clique cover with nice properties. Building on this result, in Section~4 we show that under certain conditions if a set of product states can be distinguished by one way LOCC with either party going first then it can be distinguished with product measurements.  We also provide an example to show that this is false in general. In Section~5 we consider one-way LOCC implications when the underlying graphs have extra properties as identified in graph theory; in particular, chordal or tree structures.
The final section includes an extended analysis for the important special case of domino states \cite{bennett1999quantum,cohen2017general,zhang2014nonlocality,zuo2018new}.  Examples are included throughout our presentation, as are discussions on the differences encountered when different parties go first in a one-way LOCC protocol.

\section{Alice and Bob, and Vector Representations of Graphs}

We begin by recalling basic notation and nomenclature from graph theory, drawing on various entrance points into the literature on the subject \cite{lovasz1989orthogonal,lovasz2000correction,fallat2007minimum,booth2008minimum,fallat2011variants,barioli2010zero,booth2011minimum,barioli2011minimum }.

Let $G = (V,E)$ be a {\it simple graph} with vertex set $V$ and edge set $E$. For $v,w\in V$, we write $v \sim w$ if the edge $\{ v,w \}\in E$.  The {\it complement} of $G$ is the graph $\overline{G} = (V, \overline{E})$, where the edge set $\overline{E}$ consists of all two-element sets from $V$ that are not in $E$. Another graph $G'$ is a {\it subgraph} of $G$, written $G' \leq G$, if $V' \subseteq V$ and $E' \subseteq E$ with $v,w\in V'$ whenever $\{ v,w\} \in E'$. For a subset of the vertices $V' \subset V$, $G' = (V', E')$ is the {\it induced subgraph} of $G$ on $V'$ when $E' = \{ \{v,w\} \in E: v,w \in V'\}$.

Some fundamental graphs for fixed $n\geq 1$ include: the {\it path graph} $P_n = (\{v_1,\dots , v_n    \}, E)$ such that $E = \{ \{v_i, v_{i+1}\} : 1 \leq i \leq n-1 \}$; the {\it cycle graph}  $C_n = (\{v_1,\dots , v_n    \}, E)$ such that $E = \{ \{v_i, v_{i+1}\} : 1 \leq i \leq n-1 \} \cup \{ v_n, v_1\}$; and the {\it complete graph} on $n$-vertices, $K_n = (\{v_1,\dots , v_n    \}, E)$ such that $E = \{ \{v,w\} : v\neq w \in V \}$.




\begin{defn}
Given a graph $G= (V,E)$, a function $\phi: V \rightarrow \mathbb{C}^d\backslash \{0\} $ is an {\it orthogonal representation} of $G$ if for all vertices $v_i \ne v_j\in V$,
\be
v_i \not\sim v_j \iff \langle \phi(v_i), \phi(v_j) \rangle = 0 .
\ee

The {\it minimum vector rank} of $G$, denoted $\mvr(G)$, is the smallest $d$ such that $G$ has an  orthogonal representation in $\mathbb{C}^d$.
\end{defn}

Orthogonal representations were discussed in, e.g., \cite{fallat2007minimum,lovasz1989orthogonal}. The requirement that $\phi(v) \ne 0$ distinguishes the minimum vector rank from the  {\it minimum semidefinite rank} of the graph \cite{booth2008minimum,booth2011minimum}.  If every vertex of $G$ is part of at least one edge, then these two quantities coincide, and we will be applying the existing literature of minimum semidefinite rank to prove our results. Note that these quantities are defined with respect to the underlying field $\C$, and there are examples where the minimum rank increases if we restrict ourselves to real vectors. Note also  the biconditional built into the definition, which is stronger than conditions for graph colouring. This allows us to uniquely define the graph associated with a function $\phi$.

We can now introduce a graph perspective to the setting of LOCC through orthogonal representations. In the LOCC state distinguishability context, we typically have two parties, called Alice and Bob, each with a Hilbert space $\mathcal H_A$, $\mathcal H_B$, and a set of states in $\mathcal H_A \otimes \mathcal H_B$ that they wish to distinguish only using local (quantum) measurement operations on their respective systems and classical communication between them. Here we will focus on the important case of one-way LOCC, where Alice and Bob measure in a prescribed order, with one party communicating the results of their measurement to the other, allowing for the second party to complete the measurement based on that result. We also consider the situation in which the states are all product states; that is, states of the form $\ket{\psi^A_k} \otimes \ket{\psi^B_k}$. Measurements will thus be given for Alice by the (rank-one) projections onto the states $\{ \ket{\psi^A_k} \}_k$, with measurement outcomes corresponding to these states, and similarly for Bob and the states $\{ \ket{\psi^B_k} \}_k$. (Such measurements are rank-one versions of von Neumann positive operator valued measures, or POVM's.)

\begin{defn}
Suppose we are given a set of product states $\{ \ket{\psi^A_k}\otimes \ket{\psi^B_k} \}_{k=1}^r$ on $\mathcal H_A \otimes \mathcal H_B$. The graph of these states from Alice's perspective is the unique graph $G_A$ with vertex set $V = \{ 1,2, \ldots, r\}$ such that the map $k \mapsto  \ket{\psi^A_k}$ is an orthogonal representation of $G_A$. Likewise, the graph of the states from Bob's perspective is the graph $G_B$ with vertex set $V$ such that $k \mapsto \ket{\psi^B_k}$ is an orthogonal representation of $G_B$.
\end{defn}


Consider the following illustrative example, which we also analyze from the one-way LOCC perspective as a prelude to what follows. We will always use the standard notation $\{\ket{0}, \ket{1}, \ldots , \ket{d-1}  \}$ for a fixed orthonormal basis of $\C^d$.

\begin{exa}
{\rm
Consider the following set of five (unnormalized) states in $\mathcal H_A \otimes \mathcal H_B = \C^4 \ot \C^3:$
\begin{align*}
\ket{\psi_1} &= \ket{0} \otimes   \left(   \ket{0} +  \ket{2} \right)\\
\ket{\psi_2} &= \left(   \ket{0} + \ket{1} \right) \otimes \ket{1}  \\
\ket{\psi_3} &= \left(   \ket{1} + \ket{2} \right) \otimes    \ket{2} \\
\ket{\psi_4} &= \left(   \ket{2} + \ket{3} \right) \otimes  \left(   \ket{0} - \ket{1} \right)  \\
\ket{\psi_5} &= \ket{3}  \otimes   \left(   \ket{0} + \ket{1}+ \ket{2} \right) .
\end{align*}

Looking at the set of Alice's vectors, we can see it as an orthogonal representation of the graph $G_A = P_5$ in $\C^4$; with $\ket{\psi_1^A} = \ket{0}$,  $\ket{\psi_2^A} = \ket{0}+ \ket{1}$, $\ket{\psi_3^A} = \ket{1} + \ket{2}$, $\ket{\psi_4^A} = \ket{2} + \ket{3}$, $\ket{\psi_5^A} = \ket{3}$. The set of Bob's vectors is an orthogonal representation of the `house' graph $G_B = \overline{P}_5$ in $\C^3$ depicted in Figure~1.

\begin{figure}[H]
\definecolor{ududff}{rgb}{0.30196078431372547,0.30196078431372547,1.}
\begin{tikzpicture}[line cap=round,line join=round,>=triangle 45,x=1.0cm,y=1.0cm]
\clip(-2.5,-1.32) rectangle (3.64,4.5);
\draw [line width=2.pt] (-0.16,1.)-- (0.84,1.);
\draw [line width=2.pt] (-0.16,1.)-- (-0.16,2.);
\draw [line width=2.pt] (-0.16,2.)-- (0.84,2.);
\draw [line width=2.pt] (0.84,2.)-- (0.84,1.);
\draw [line width=2.pt] (-0.16,2.)-- (0.36,2.84);
\draw [line width=2.pt] (0.36,2.84)-- (0.84,2.);
\draw (-1.16,2.50) node[anchor=north west] {$| \psi_{1}^B\rangle$};
\draw (0.7,1.04) node[anchor=north west] {$| \psi_{2}^B\rangle$};
\draw (-0.02,3.5) node[anchor=north west] {$| \psi_{3}^B\rangle$};
\draw (-0.96,1.06) node[anchor=north west] {$| \psi_{4}^B\rangle$};
\draw (0.90,2.5) node[anchor=north west] {$| \psi_{5}^B\rangle$};
\begin{scriptsize}
\draw [fill=ududff] (-0.16,1.) circle (2.5pt);
\draw [fill=ududff] (0.84,1.) circle (2.5pt);
\draw [fill=ududff] (-0.16,2.) circle (2.5pt);
\draw [fill=ududff] (0.84,2.) circle (2.5pt);
\draw [fill=ududff] (0.36,2.84) circle (2.5pt);
\end{scriptsize}
\end{tikzpicture}
\caption{Complement of $P_{5}$, represented in $\mathbb{C}^{3}$ in Example~1.}
\end{figure}
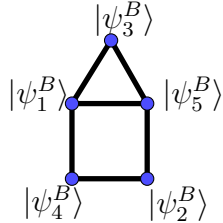

Notice that these states can be distinguished with one-way LOCC with Alice going first: If Alice measures in the standard basis given by $\{  \ket{0}, \ket{1}, \ket{2}, \ket{3}  \}$, then for each of her possible outcomes from this set, there are only two remaining possibilities; and these are orthogonal from Bob's perspective. In general, in order to distinguish our states with one-way LOCC, every measurement outcome of Alice's needs to eliminate possibilities so that the remaining possible states are mutually orthogonal on Bob's side. For instance, in this example if Alice gets a measurement outcome of $\ket{0}$, Bob is left to distinguish between the vectors $\lbrace \ket{0}+\ket{2}, \ket{1}  \rbrace$, which are mutually  orthogonal.

Note that this analysis does not tell us whether we can distinguish with Bob measuring first. We will discover in the next section that this is not possible.

}
\end{exa}

\section{Graph Clique Covers and One-Way LOCC}

We will now attempt to categorize the one-way LOCC distinguishability of states in terms of their corresponding graphs. We begin with the notion of a clique cover.
\begin{defn}
Given a graph $G = (V,E)$. A set of graphs $\{ G_i = (V_i, E_i) \}$ covers $G$ if $V = \cup_i V_i$ and $E = \cup_i E_i$.

A collection of graphs $\{G_i \}$ is a clique cover for $G$ if $\{G_i \}$ covers $G$ and if each of the $G_i$ is a complete graph (clique).

The clique cover number $\mathrm{cc}(G)$ is the smallest possible number of subgraphs contained in a clique cover of $G$.
\end{defn}

A clique cover can be thought of as a collection of (not necessarily disjoint) induced subgraphs of $G$, each of which is a complete graph. It is a cover if every edge is contained in at least one of the cliques.

In the example of the previous section, the only clique cover of $P_5$ is the set of edges; while the complement $\overline{P_5}$ can be clique covered with a three-cycle and three individual edges, as seen in Figure \ref{Figure2}. Hence, $\mathrm{cc}(P_5) = \mathrm{cc}(\overline{P_5}) = 4$.

\begin{figure}[H]
\begin{center}
\definecolor{ududff}{rgb}{0.30196078431372547,0.30196078431372547,1.}
\begin{tikzpicture}[line cap=round,line join=round,>=triangle 45,x=1.0cm,y=1.0cm]
\clip(-2.24,-2.54) rectangle (6.58,1.94);
\draw [line width=2.pt] (-0.84,1.04)-- (-0.84,0.04);
\draw [line width=2.pt] (0.34,1.)-- (0.34,0.);
\draw [line width=2.pt] (1.42,0.98)-- (1.42,0.06);
\draw [line width=2.pt] (3.,0.)-- (4.,0.);
\draw [line width=2.pt] (3.,0.)-- (3.46,0.96);
\draw [line width=2.pt] (3.46,0.96)-- (4.,0.);
\draw (2.6,0.02) node[anchor=north west] {$| \psi_{1}^B\rangle$};
\draw (-0.06,1.92) node[anchor=north west] {$| \psi_{2}^B\rangle$};
\draw (3.78,0.02) node[anchor=north west] {$| \psi_{3}^B\rangle$};
\draw (-1.22,0.08) node[anchor=north west] {$| \psi_{4}^B\rangle$};
\draw (3.1,1.78) node[anchor=north west] {$| \psi_{5}^B\rangle$};
\draw (-0.02,0.04) node[anchor=north west] {$| \psi_{4}^B\rangle$};
\draw (1.04,1.92) node[anchor=north west] {$| \psi_{5}^B\rangle$};
\draw (-1.24,1.94) node[anchor=north west] {$| \psi_{1}^B\rangle$};
\draw (1.06,0.08) node[anchor=north west] {$| \psi_{2}^B\rangle$};
\begin{scriptsize}
\draw [fill=ududff] (-0.84,1.04) circle (2.5pt);
\draw [fill=ududff] (-0.84,0.04) circle (2.5pt);
\draw [fill=ududff] (0.34,1.) circle (2.5pt);
\draw [fill=ududff] (0.34,0.) circle (2.5pt);
\draw [fill=ududff] (1.42,0.98) circle (2.5pt);
\draw [fill=ududff] (1.42,0.06) circle (2.5pt);
\draw [fill=ududff] (3.,0.) circle (2.5pt);
\draw [fill=ududff] (4.,0.) circle (2.5pt);
\draw [fill=ududff] (3.46,0.96) circle (2.5pt);
\end{scriptsize}
\end{tikzpicture}
\end{center}
\caption{Clique cover of $\overline{P_{5}}$ with labels as represented in Example~1.}
\label{Figure2}
\end{figure}
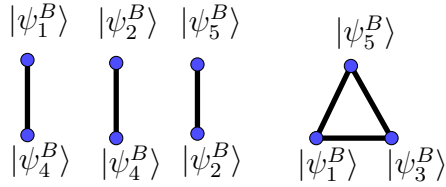

Our first result shows that a set of product states can be perfectly distinguished with one-way LOCC precisely when there is a clique cover with nice properties.

\begin{thm} \label{thm: one-way LOCC graphs} Given a set of product states in $\H_A \ot \H_B$, let $G_A$ and $G_B$ be the graphs of the states from Alice and Bob's perspectives, respectively.  Let $\phi: V_A \rightarrow \H_A$  be the association of vertices with  Alice's states and assume that the set $\{\phi(v): v\in V\}$ spans $\H_A$.

Then the states are distinguishable with one-way LOCC with Alice measuring first if and only if there exists
\begin{itemize}
\item[(1)] a graph $G$ satisfying $G_A \le G \le \overline{G_B}$,
\item[(2)] a clique cover $\{V_j \}_{j = 1}^k$ of $G$, and,
\item[(3)] a POVM $\{Q_j\}$ on $\mathcal H_A$ such that for all $v\in V_A$, $Q_j\phi(v) \ne 0$ implies that $v \in V_j$.
\end{itemize}
\end{thm}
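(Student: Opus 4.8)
The plan is to prove both directions by carefully unwinding what a one-way LOCC protocol with Alice measuring first actually does. Recall that Alice's measurement is a rank-one POVM $\{|a_i\rangle\langle a_i|\}$ with $\sum_i |a_i\rangle\langle a_i| = I_{\H_A}$, and after outcome $i$ Bob must be able to perfectly distinguish the post-measurement states, which means the set $\{\langle a_i | \psi^A_k\rangle\, |\psi^B_k\rangle : \langle a_i|\psi^A_k\rangle \ne 0\}$ must be an orthogonal set in $\H_B$. The key translation is: for each outcome $i$, let $V_i := \{k : \langle a_i | \psi^A_k\rangle \ne 0\}$ be the set of states ``consistent'' with outcome $i$; the orthogonality condition on Bob's side says exactly that any two $k \ne k'$ in $V_i$ satisfy $\langle \psi^B_k | \psi^B_{k'}\rangle = 0$, i.e. $k \sim k'$ in $\overline{G_B}$. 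So each $V_i$ is a clique in $\overline{G_B}$. This will give us item (2) once we also see that these cliques cover an appropriate graph.

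For the forward direction, I would take a distinguishing protocol and, for each Alice outcome $i$, set $\mathcal{S}_i := \mathrm{span}\{|a_i\rangle\}$ — wait, that is one-dimensional, so instead I should group outcomes: after possibly merging outcomes whose consistent-sets are nested or refining, define $\mathcal{S}_i$ to be the span of all measurement vectors $|a_\ell\rangle$ whose consistent set equals $V_i$ (for the distinct sets $V_1,\dots,V_k$ that appear). Because $\sum_\ell |a_\ell\rangle\langle a_\ell| = I$ and each $|a_\ell\rangle$ lies in $\mathcal{S}_{i(\ell)}$, these subspaces must span $\H_A$; the subtlety is showing the sum is in fact \emph{direct}, which I expect to follow from the spanning hypothesis on $\{\phi(v)\}$ together with the support condition we now verify. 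For item (3): if $\langle a_\ell | \psi^A_k\rangle \ne 0$ then $k \in V_{i(\ell)}$, so the support of $|a_\ell\rangle$ pairs non-trivially only with those $\phi(k)$ having $k$ in its consistent class; transposing this statement, the support of $\phi(v)$ is orthogonal to every $|a_\ell\rangle$ with $v \notin V_{i(\ell)}$, hence $\mathrm{supp}\,\phi(v) \subseteq \bigoplus_{\{j : v \in V_j\}} \mathcal{S}_j$. Finally define $G$ to be the graph on $V$ whose edges are exactly the pairs lying together in some $V_j$; then $\{V_j\}$ is a clique cover of $G$ by construction, $G \le \overline{G_B}$ since each $V_j$ is a clique in $\overline{G_B}$ as shown above, and $G_A \le G$ because if $v \sim w$ in $G_A$ then $\langle \phi(v), \phi(w)\rangle \ne 0$, forcing some $|a_\ell\rangle$ to have nonzero inner product with both, so $v, w \in V_{i(\ell)}$ and $v \sim w$ in $G$.

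For the converse, given $G$, the clique cover $\{V_j\}_{j=1}^k$, and the decomposition $\H_A = \bigoplus_j \mathcal{S}_j$, I would build Alice's measurement by choosing, within each $\mathcal{S}_j$, an orthonormal basis $\{|a_{j,m}\rangle\}_m$; the union over all $j$ and $m$ is then an orthonormal basis of $\H_A$, giving a valid von Neumann measurement. I must check two things. First, after outcome $(j,m)$, Bob's residual states are $\{\langle a_{j,m}|\psi^A_k\rangle\,|\psi^B_k\rangle\}$; by the support condition, $\langle a_{j,m}|\phi(k)\rangle \ne 0$ forces $\mathrm{supp}\,\phi(k)$ to meet $\mathcal{S}_j$, which (since supports are contained in sums of the $\mathcal{S}_{j'}$ with $k \in V_{j'}$) forces $k \in V_j$; and $V_j$ is a clique in $G \le \overline{G_B}$, so any two such $k$ are orthogonal on Bob's side — Bob succeeds. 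Second, completeness: every state must actually be detectable, i.e. for each $k$ some $|a_{j,m}\rangle$ has $\langle a_{j,m}|\phi(k)\rangle \ne 0$; this holds because $\{|a_{j,m}\rangle\}$ is a full basis of $\H_A$ and $\phi(k) \ne 0$.

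The main obstacle I anticipate is the directness of the sum $\H_A = \bigoplus_j \mathcal{S}_j$ in the forward direction — a priori the spans of measurement vectors grouped by consistent-set need not be linearly independent subspaces. The resolution should be to first pass to a ``reduced'' or ``canonical'' form of Alice's POVM. Concretely, I would argue that one may assume the distinct consistent-sets $V_1,\dots,V_k$ are incomparable (if $V_i \subseteq V_{i'}$ one can reassign outcomes), and then use that the $\{\phi(v)\}$ span $\H_A$ to show that the orthogonal complement relations ``$\mathrm{supp}\,\phi(v) \perp |a_\ell\rangle$ whenever $v \notin V_{i(\ell)}$'' pin down the $\mathcal{S}_j$ as the joint intersections $\bigcap_{v \notin V_j}(\mathrm{supp}\,\phi(v))^{\perp}$, which are automatically a direct-sum decomposition when the cover is a genuine (irredundant) clique cover. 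Getting this bookkeeping exactly right — relating redundancy in the POVM to redundancy in the clique cover — is where the real care is needed; everything else is a routine unwinding of definitions.
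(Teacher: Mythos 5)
Your converse direction (choose an orthonormal basis inside each $\mathcal S_j$ and check, via the support condition, that each outcome's consistent set lies inside a single clique) is correct and is essentially the paper's easy direction; likewise your extraction of the cliques $V_j$, the graph $G$, and the containments $G_A \le G \le \overline{G_B}$ agrees with the paper. The genuine gap is exactly at the point you flag, and your proposed resolution of it is false. Consider the states $\ket{\psi_k}=\ket{\alpha_k}\ot\ket{k-1}$, $k=1,2,3$, in $\C^2\ot\C^3$, with $\ket{\alpha_1}=\ket{1}$, $\ket{\alpha_2}=\tfrac{\sqrt{3}}{2}\ket{0}+\tfrac12\ket{1}$, $\ket{\alpha_3}=\tfrac{\sqrt{3}}{2}\ket{0}-\tfrac12\ket{1}$, and Bob's states orthonormal. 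Here $G_A=K_3=\overline{G_B}$, the $\ket{\alpha_k}$ span $\C^2$, and the states are one-way distinguishable with Alice first; one valid protocol is the trine POVM $\{\tfrac23\kb{u_i}{u_i}\}_{i=1}^3$ with $\ket{u_i}\perp\ket{\alpha_i}$ (Bob can in fact finish alone). Its consistent sets are $V_1=\{2,3\}$, $V_2=\{1,3\}$, $V_3=\{1,2\}$: pairwise incomparable, and an irredundant clique cover of $K_3$. Yet your grouped spans $\mathcal S_i=\spn\{\ket{u_i}\}$, and equally your canonical intersections $\bigl(\spn\{\phi(v):v\notin V_j\}\bigr)^{\perp}=\spn\{\ket{u_j}\}$, are three pairwise non-orthogonal lines in $\C^2$, so they do not form a direct sum. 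Worse, no bookkeeping on these particular cliques can work: if $\C^2=\mathcal S_1\oplus\mathcal S_2\oplus\mathcal S_3$ satisfied condition (3) for this cover, some summand would be $\{0\}$, forcing two of the non-orthogonal vectors $\ket{\alpha_k}$ into mutually orthogonal summands, a contradiction. The only way to satisfy (1)--(3) in this example is to coarsen the cover to the single clique $\{1,2,3\}$ with $\mathcal S_1=\C^2$; so a correct argument must be permitted to change the cliques (or the measurement), not merely reassign outcomes among nested consistent sets, and since the sets here are not nested your normalization step never triggers.

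For comparison, the paper handles this juncture differently: it keeps the consistent sets $V_j$ as the cliques but builds the subspaces by iteratively orthogonalizing the ranges $\mathcal R_j$ of the measurement operators, $\mathcal S_1=\mathcal R_1$ and $\mathcal S_j=\mathcal R_j\cap\bigl(\bigcap_{i<j}\mathcal R_i^{\perp}\bigr)$, and then verifies only the one-way containment that (3) actually demands: if $\phi(v)$ has a component in $\mathcal S_j\subseteq\mathcal R_j$ then $v\in V_j$. The structural point your proposal misses is precisely this decoupling --- the subspace attached to a clique need not contain (and may be much smaller than, even disjoint from) the span of the measurement vectors whose consistent set is that clique; you are trying to force a tight correspondence that simply does not exist for non-orthogonal POVMs. (Even the paper's iteration needs care on heavily non-orthogonal inputs such as the trine above, where one should first coarsen the cover or pass to a compatible projective refinement of Alice's measurement; but in any case your specific claim, that irredundancy of the cover makes the intersections $\bigl(\spn\{\phi(v):v\notin V_j\}\bigr)^{\perp}$ a direct-sum decomposition, is refuted by the example.) So as written the forward direction of your proposal has a real gap; the rest matches the paper.
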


\begin{proof} One direction of the proof is straightforward: suppose we have a graph $G$ with $G_A \leq G \leq G_B$, a clique cover $\{ V_j \}_{j=1}^k$ of $G$, and 
$\{Q_j\}$ with the support assumption of (3). 
If Alice gets the outcome $j$ from the associated measurement, then $Q_j\phi(v) \ne 0$, which implies that $v \in V_j$. Since the vertices in $V_j$ form a clique in $G \le \overline{ G_B}$, they form a disconnected set in $G_B$, reflecting the fact that they are mutually orthogonal. Hence, Bob can distinguish them once he knows Alice's outcome.

For the other direction of the proof, let $\{Q_j\}_{j=1}^k$ be a measurement on Alice's system that allows Bob to complete a perfect discrimination of their states. Then for each $Q_j$, we define $V_j = \{ v \in V: Q_j \phi(v) \ne 0 \}$. It is necessary that the vertices in $V_j$ form a clique in $\overline{G}_B$ for Bob to be able to distinguish the remaining possibilities. Define  $G$ to be the union (both vertices and edges) of the cliques induced by the $V_j$. By construction, this is a subgraph of $\overline{G}_B$ and the $V_j$ form a clique cover of $G$.

On the other hand, if $\bk{\phi(u)}{\phi(v)} \ne 0$, then $\bra{\phi(u)}Q_j\ket{\phi(v)} \ne 0$ for some $j$, which means that $u$ and $v$ are both in $V_j$.  Hence every edge in Alice's graph $G_A$ is contained in one of the cliques determined by some $V_j$. Thus we see that $G_A$ is a subgraph of $G$, and we get $G_A \le G \le \overline{G}_B$ as desired.
\end{proof}

\begin{exa}
{\rm
Returning to our example from the previous section in light of the theorem, the graph $G$ is the path $P_5$, and $G_A = P_5 = \overline{G_B}$.  The clique cover is simply the collection of edges, with corresponding subspaces of $\mathcal H_A$ as in the theorem given by:
\bee
V_0 = \{ v_1, v_2 \} &\qquad& {\mathcal S}_0 =  \{  \left(   \ket{1} + \ket{2} \right), \left(   \ket{2} + \ket{3} \right),\ket{3} \}^\perp = \spn \{ \ket{0} \} \\
V_1 = \{ v_2, v_3 \} &\qquad& {\mathcal S}_1 =  \{    \ket{0}, \left(   \ket{2} + \ket{3} \right),\ket{3} \}^\perp = \spn \{   \ket{1} \} \\
V_2= \{ v_3, v_4 \} &\qquad& {\mathcal S}_2 =  \{    \ket{0},  \left(   \ket{0} + \ket{1} \right),  \ket{3} \}^\perp = \spn \{   \ket{2} \} \\
V_3 = \{ v_4, v_5 \} &\qquad& {\mathcal S}_3 =  \{    \ket{0}, \left(   \ket{0} + \ket{1} \right),   \left(   \ket{1} + \ket{2} \right) \}^\perp = \spn \{   \ket{3} \}. \\
\eee
This tells us that we should measure in the standard basis, matching our earlier strategy.  Note that if instead we have Bob measure first, we can see right away that this is not possible. The graph corresponding to Bob's states is the house graph; and a minimum clique cover contains four cliques. Since we are in $\C^3$, there is no clique cover of size less than or equal to $d$ and it is not possible to meet the conditions of the theorem.
In conclusion, if we are to distinguish the states in Example~1 using one-way LOCC, it must be with Alice measuring first.
}
\end{exa}

\section{Product Measurements and One-Way LOCC}

We can build on Theorem~\ref{thm: one-way LOCC graphs} to derive the following important and intuitive consequence, in the case that order of the parties does not matter in a one-way LOCC protocol with an extra condition on the states.


\begin{thm}\label{thm: Two 1-LOCC imply 0-LOCC}
Suppose we have a set of product states  $\{ \ket{\psi_i^A} \ot  \ket{\psi_i^B}\}$ in $\H_A\ot \H_B$, and let $G_A$ and $G_B$ be the graphs for Alice and Bob, respectively. Suppose further that $G_A = \overline{G_B}$; that is, for every $i \ne j$,
\[
\bk{\psi_i^A}{\psi_j^A} = 0 \iff  \bk{\psi_i^B}{\psi_j^B} \ne 0 .
\]

Then it is possible to distinguish the states with one-way LOCC with Alice going first as well as with one-way LOCC with Bob going first if and only if it is possible to distinguish the states with a product measurement.
\end{thm}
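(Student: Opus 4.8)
The plan is to obtain both directions of the biconditional from Theorem~\ref{thm: one-way LOCC graphs}. One direction is immediate: a product measurement --- Alice performs a fixed POVM on $\H_A$, Bob performs a fixed POVM on $\H_B$, and they compare outcomes --- is in particular a (non-adaptive) one-way LOCC protocol with Alice going first, and likewise one with Bob going first. So if the states can be distinguished with a product measurement, they can be distinguished by one-way LOCC with either party measuring first. The content of the theorem is the converse.

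For the converse, we may first reduce to the case where Alice's states span $\H_A$ and Bob's states span $\H_B$: restricting each local Hilbert space to the span of the relevant local states leaves distinguishability unchanged, and a product POVM on the restricted spaces extends to one on the full spaces by adjoining the complementary projections as extra (never-occurring) outcomes. Write $\phi_A(i)=\ket{\psi_i^A}$ and $\phi_B(i)=\ket{\psi_i^B}$. Applying Theorem~\ref{thm: one-way LOCC graphs} to the Alice-first protocol, and using that $G_A=\overline{G_B}$ collapses the sandwich $G_A\le G\le\overline{G_B}$ to $G=G_A$, we obtain a clique cover $\{V_j\}_{j=1}^k$ of $G_A$ and a decomposition $\H_A=\oplus_{j=1}^k\S_j$ such that the support of $\phi_A(v)$ lies in $\oplus_{\{j:\,v\in V_j\}}\S_j$ for every vertex $v$. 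Running the same theorem with the roles of Alice and Bob interchanged on the Bob-first protocol, and using $G_B=\overline{G_A}$, gives a clique cover $\{W_l\}_{l=1}^m$ of $G_B$ and a decomposition $\H_B=\oplus_{l=1}^m\mathcal{T}_l$ such that the support of $\phi_B(v)$ lies in $\oplus_{\{l:\,v\in W_l\}}\mathcal{T}_l$ for every $v$.

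The key step is the combinatorial claim that $|V_j\cap W_l|\le 1$ for all $j$ and $l$. Indeed, if $u\ne v$ both lie in some $V_j$, then since $\{V_j\}$ is a clique cover of $G_A$ the pair $\{u,v\}$ is an edge of $G_A$; if $u,v$ also both lie in some $W_l$, then likewise $\{u,v\}$ is an edge of $G_B$. But $G_A=\overline{G_B}$ says exactly that $\{u,v\}$ is an edge of $G_A$ if and only if it is not an edge of $G_B$, a contradiction. Granting the claim, let $Q_j$ and $R_l$ be the orthogonal projections of $\H_A$ onto $\S_j$ and of $\H_B$ onto $\mathcal{T}_l$; then $\{Q_j\}$ and $\{R_l\}$ are von Neumann POVMs, so $\{Q_j\ot R_l\}_{j,l}$ is a product measurement. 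If the unknown state is $\ket{\psi_v^A}\ot\ket{\psi_v^B}$, the outcome $(j,l)$ occurs with positive probability only when $Q_j\phi_A(v)\ne 0$ and $R_l\phi_B(v)\ne 0$, which by the two support conditions forces $v\in V_j$ and $v\in W_l$; by the claim this $v$ is then unique. Hence the product measurement distinguishes the states, completing the converse. (If rank-one outcomes are desired, one may instead have Alice measure in an orthonormal basis of each $\S_j$ and Bob in an orthonormal basis of each $\mathcal{T}_l$.)

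I expect the genuine content to be concentrated in the intersection bound $|V_j\cap W_l|\le 1$: this is where the hypothesis $G_A=\overline{G_B}$ is essential rather than merely $G_A\le\overline{G_B}$, since it is precisely complementarity that makes ``$u,v$ share a clique of Alice's cover'' and ``$u,v$ share a clique of Bob's cover'' mutually exclusive. The remaining work is bookkeeping: ensuring the hypotheses of Theorem~\ref{thm: one-way LOCC graphs} (the spanning condition, and its symmetric form with Alice and Bob swapped) are in force --- hence the preliminary reduction --- and checking that part~(3) of that theorem translates into ``outcome $(j,l)$ with positive probability $\Rightarrow v\in V_j\cap W_l$,'' which is routine.
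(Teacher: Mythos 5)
Your proposal is correct and follows essentially the same route as the paper's proof: apply Theorem~\ref{thm: one-way LOCC graphs} in both directions, form the product decomposition of $\H_A\ot\H_B$, and derive $|V_j\cap W_l|\le 1$ from the complementarity $G_A=\overline{G_B}$, which is exactly the paper's key step. Your added reduction to the spanning case and the explicit check that the support condition yields ``outcome $(j,l)$ implies $v\in V_j\cap W_l$'' are minor tightenings of details the paper leaves implicit, not a different argument.
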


\begin{proof}
The states can automatically be distinguished with one-way LOCC in either direction if the states can be distinguished with a product measurement.

For the other direction, suppose it is possible to distinguish the states with one-way LOCC with Alice going first as well as with one-way LOCC with Bob going first.
Then from two applications of Theorem~1, we have that $\H_A = \oplus_j {\mathcal S}_j$ and similarly, $\H_B = \oplus_i{\mathcal O}_i$. This implies that the entire Hilbert space has a product decomposition $\H = \H_A \otimes \H_B = \oplus_{i,j} ({\mathcal S}_j \otimes {\mathcal O}_i)$, with the corresponding graph implications of the result.

Suppose that Alice and Bob each perform their one-way measurements and receive the outcomes ${\mathcal S}_j$ and ${\mathcal O}_i$. This means that  $v \in V_j \cap W_i$, where $V_j$ induces a clique in $\overline{G_B}$ and $W_i$ induces a clique in $\overline{G_A}$.
Suppose that two vertices are each contained in $V_j \cap W_i$. Then they are connected by an edge in both $\overline{G}_B$ and $\overline{G}_A = G_B$, which is a contradiction. Hence for each pair $i,j$, we have $\vert V_j \cap W_i \vert \le 1$, which means that Alice and Bob can determine the identity of their state without further communication.
\end{proof}

The example below shows that the assumption $\overline{G}_A = G_B$ is necessary in the statement of Theorem \ref{thm: Two 1-LOCC imply 0-LOCC}. If $\overline{G}_A \ne G_B$ , then Alice and Bob have extra flexibility, which weakens the assumption of one-way LOCC in both directions.

\begin{exa}\label{Ex: Redundant Orthogonality Qutrits}
{\rm
Consider the following states in $\C^3 \ot \C^3$:
\begin{align*}
\ket{\psi_1} &= \ket{0} \otimes   \ket{0}\\
\ket{\psi_2} &= \left(   \ket{1} + \ket{2} \right)  \otimes   \ket{0}\\
\ket{\psi_3} &=\left(   \ket{1} -\ket{2} \right) \otimes   \ket{0}\\
\ket{\psi_4} &=\ket{0} \ot  \left(   \ket{1} + \ket{2} \right) \\
\ket{\psi_5} &=\ket{0} \ot \left(   \ket{1} -\ket{2} \right) \\
\ket{\psi_6} &= \ket{1} \otimes   \ket{1}\\
\ket{\psi_7} &= \ket{2} \otimes   \ket{2}.
\end{align*}

Observe in this case that $G_A = G_B = C_3 \cup C_4$ is the disjoint union of two cycle graphs, with respective labelling determined by the ordered sets:
\[
\{ \ket{\psi^A_1},  \ket{\psi^A_4},  \ket{\psi^A_5} \}, \quad \{ \ket{\psi^A_2},  \ket{\psi^A_6},  \ket{\psi^A_3},  \ket{\psi^A_7} \},
\]
and
\[
\{ \ket{\psi^B_1},  \ket{\psi^B_2},  \ket{\psi^B_3} \}, \quad \{ \ket{\psi^B_4},  \ket{\psi^B_6},  \ket{\psi^B_5},  \ket{\psi^B_7} \}.
\]
On the other hand, $\overline{G_A}$ is a graph not isomorphic to this graph, given by the disjoint union of three isolated vertices $(\{ 1\}, \{ 4 \}, \{ 5 \})$ and two isolated edges $( \{ 2,3\}, \{ 6,7\} )$.

These states are distinguishable with one-way LOCC: Alice measures in the basis $\{  \ket{0} ,\left(   \ket{1} \pm \ket{2} \right) \}$. If she gets the outcome $\ket{0}$, then Bob completes the measurement using the same measurement that Alice did; but if she gets either of the remaining outcomes, Bob should measure in the standard basis. It is clear that these states are symmetric with respect to Alice and Bob, so they can also be distinguished with one-way LOCC with Bob going first and corresponding measurement adjusments.
}
\end{exa}

\begin{prop}
The 7 states described in Example \ref{Ex: Redundant Orthogonality Qutrits} cannot be distinguished with a product measurement, even though they can be distinguished with one-way LOCC in both directions.
\end{prop}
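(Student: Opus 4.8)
The plan is to argue by contradiction: assume a product measurement $\{Q_a\ot P_b\}$, with $\{Q_a\}$ a POVM on $\H_A=\C^3$ for Alice and $\{P_b\}$ a POVM on $\H_B=\C^3$ for Bob, perfectly distinguishes the seven states, and then squeeze out enough combinatorics to contradict the completeness relations $\sum_a Q_a=I$ and $\sum_b P_b=I$. First I would record, for each $v$, the index sets $A_v=\{a:\langle\psi_v^A|Q_a|\psi_v^A\rangle>0\}$ and $B_v=\{b:\langle\psi_v^B|P_b|\psi_v^B\rangle>0\}$ of local outcomes ``consistent with'' state $v$, and note three facts: (i) $A_v$ and $B_v$ are nonempty, since their defining sums are $\langle\psi_v^A|\psi_v^A\rangle>0$ and $\langle\psi_v^B|\psi_v^B\rangle>0$; (ii) $A_v$ depends only on $\ket{\psi_v^A}$ and $B_v$ only on $\ket{\psi_v^B}$; (iii) perfect discrimination means no outcome $(a,b)$ can have positive probability on two different states, so for $v\ne v'$ one has $A_v\cap A_{v'}=\emptyset$ or $B_v\cap B_{v'}=\emptyset$. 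I would also use the standard fact that for $Q\succeq 0$, $\langle\xi|Q|\xi\rangle=0\iff Q\xi=0$, so that $a\in A_v\iff Q_a\ket{\psi_v^A}\ne 0$.

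Next I would extract two disjointness relations from the repeated local vectors. Since $\ket{\psi_4^A}=\ket{\psi_5^A}=\ket0$, fact (ii) gives $A_4=A_5\ne\emptyset$, so fact (iii) for the distinct states $4,5$ forces $B_4\cap B_5=\emptyset$; symmetrically $\ket{\psi_2^B}=\ket{\psi_3^B}=\ket0$ yields $B_2=B_3$, hence $A_2\cap A_3=\emptyset$. Then I would show $A_6\cap A_7\ne\emptyset$: since $\sum_a Q_a=I$ there is an $a$ with $Q_a\ket1\ne 0$; since $A_2\cap A_3=\emptyset$ with $\ket{\psi_2^A}=\ket1+\ket2$ and $\ket{\psi_3^A}=\ket1-\ket2$, this $a$ annihilates $\ket1+\ket2$ or $\ket1-\ket2$, so $Q_a\ket2=\pm Q_a\ket1\ne 0$; as $\ket{\psi_6^A}=\ket1$ and $\ket{\psi_7^A}=\ket2$ this places $a$ in $A_6\cap A_7$. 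The mirror argument on Bob's side, using $B_4\cap B_5=\emptyset$ together with $\ket{\psi_4^B}=\ket1+\ket2$, $\ket{\psi_5^B}=\ket1-\ket2$, $\ket{\psi_6^B}=\ket1$, $\ket{\psi_7^B}=\ket2$, gives $B_6\cap B_7\ne\emptyset$. Since $6$ and $7$ are distinct states, fact (iii) demands $A_6\cap A_7=\emptyset$ or $B_6\cap B_7=\emptyset$, a contradiction; combined with the one-way protocols already given in Example~\ref{Ex: Redundant Orthogonality Qutrits}, this proves the proposition.

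The subtlety worth flagging — and the reason a one-line appeal to Theorem~\ref{thm: one-way LOCC graphs} does not settle the matter — is that here $E(G_A)\cap E(G_B)=\emptyset$, i.e. $G_A\le\overline{G_B}$ and $G_B\le\overline{G_A}$, so the purely graph-theoretic necessary condition for one-way LOCC is met for both orderings and cannot by itself forbid a product measurement. The hard part is therefore the step that reaches past the confusability graphs to the actual vectors: one must exploit that on a single side $\ket1+\ket2$ and $\ket1-\ket2$ cannot both be detected by a single POVM element, whereas $\ket1$ and $\ket2$ are each non-orthogonal to both; the key technical move is using positivity of the $Q_a$ (resp. $P_b$) to convert the combinatorial relation $A_2\cap A_3=\emptyset$ (resp. $B_4\cap B_5=\emptyset$) into the linear relation $Q_a\ket2=\pm Q_a\ket1$ (resp. $P_b\ket2=\pm P_b\ket1$) that forces $A_6\cap A_7$ (resp. $B_6\cap B_7$) to be nonempty. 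I would also verify that nothing in the argument assumes the local measurements are projective or rank one; working throughout with $\langle\xi|Q_a|\xi\rangle$ and positive-semidefiniteness, it applies to arbitrary product POVMs.
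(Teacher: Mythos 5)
Your proof is correct, and it takes a genuinely different route from the paper's. The paper argues by near-exhaustive enumeration: it shows that any first measurement for Alice in a one-way protocol must (up to phases) be the basis $\{\ket{0},\ket{1}+\ket{2},\ket{1}-\ket{2}\}$, by listing the five $4$-element subsets of her local vectors that fail to span $\C^3$ and discarding two whose complements are not orthogonal on Bob's side; it then does the same for Bob and reads off from the outcome table that the product of these two forced measurements cannot separate $\ket{\psi_6}$ from $\ket{\psi_7}$. You instead run a direct contradiction with no enumeration and no uniqueness claim: the repeated local vectors $\ket{\psi_4^A}=\ket{\psi_5^A}$ and $\ket{\psi_2^B}=\ket{\psi_3^B}$ force the disjointness relations $A_2\cap A_3=\emptyset$ and $B_4\cap B_5=\emptyset$; positivity converts these into the linear relations $Q_a\ket{2}=\pm Q_a\ket{1}$ and $P_b\ket{2}=\pm P_b\ket{1}$ for suitably chosen outcomes, which places some $a$ in $A_6\cap A_7$ and some $b$ in $B_6\cap B_7$ and hence confuses $\ket{\psi_6}$ with $\ket{\psi_7}$. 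Each approach buys something: the paper's version yields the stronger structural fact that the one-way strategies are essentially unique (which it uses as its opening observation), while yours is shorter, explicitly valid for arbitrary product POVMs rather than rank-one or projective ones, identifies precisely which pair of states any product measurement must fail on, and correctly flags why the graph-theoretic machinery of Theorem~\ref{thm: one-way LOCC graphs} alone cannot decide the question. Both are sound; yours would serve as a clean replacement or companion to the published argument.
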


\begin{proof}
This fact follows from the observation that Alice's initial measurement for one-way LOCC is unique. Her measurement operators must eliminate at least four possibilities so that Bob has only three orthogonal possibilities remaining. This means that the span of these four vectors cannot be all of $\C^3$, since they are in the kernel of a nonzero operator. There are only 5 sets of 4 of the $\ket{\psi_i}$ that do not span all of $\C^3$. It must also be true that the complementary sets of 3 vectors must be mutually orthogonal on Bob's side. This eliminates 2 of the sets, leaving us with the following three sets of vectors:
\bee
\{ \ket{\psi_4} ,\ket{\psi_5} ,\ket{\psi_6} ,\ket{\psi_7} \}, \,\,\, \{ \ket{\psi_1} ,\ket{\psi_2} ,\ket{\psi_3} ,\ket{\psi_5} \}, \,\,\, \{ \ket{\psi_1} ,\ket{\psi_2} ,\ket{\psi_3} ,\ket{\psi_4} \}.
\eee
These correspond to the measurement $\{  \ket{0} ,\left(   \ket{1} + \ket{2} \right),\left(   \ket{1} - \ket{2} \right) \}$ described above. There are no other options for Alice's initial measurement or for Bob's. As we noted, the second measurement is dependent on the outcome of the first; there is no way to accomplish this without knowing the other outcome.
\end{proof}

In table form, we can see the possibilities with Example~3. The rows and columns represent Alice and Bob's measurements, respectively. The table entries correspond to the possible states obtained amongst $\{ \ket{\psi_1}, \ldots , \ket{\psi_7}\}$:
\bee
\begin{array}{c|c|c|c|c}
& \ket{0}& \ket{1} + \ket{2} & \ket{1} - \ket{2} & \cr \hline
\ket{0} &1&4 & 5  & \{ 1,4,5\} \cr\hline
 \ket{1} + \ket{2}  &  2& 6,7 & 6,7 & \{ 2,6,7\}   \cr\hline
 \ket{1} - \ket{2}  &  3& 6,7 & 6,7 & \{3,6,7\} \cr\hline
 & \{1,2,3\} &\{4,6,7\} &\{5,6,7\} &
\end{array}
\eee

This implies that Theorem \ref{thm: Two 1-LOCC imply 0-LOCC} is not true without the assumption that $G_A = \overline{G_B}$. As noted above, for this example $G_A = G_B = C_3 \cup C_4$, the union of a triangle with a four-cycle. This has a clique cover number of 5. However, if you add a single diagonal to the four-cycle, you get an intermediate graph which is a subgraph of $\overline{G}_B$ and which has clique cover number 3, and it is this measurement that corresponds to the one-way LOCC measurement. Crucially, this diagonal edge is the same from both Alice's and Bob's perspective  (corresponding in both cases to states $\ket{\psi_6}$ and $\ket{\psi_7}$); it must be part of the initial clique cover, which is why the product measurement does not work.


\section{Chordal Graphs and One-Way LOCC}

We next conduct further one-way LOCC analysis on some special cases considered in graph theory.

\begin{defn}
A graph $G$ is chordal if it contains no induced cycles $C_n$ of size greater than three.

A vertex $v$ in a graph $G$ is simplicial if its neighbourhood forms a clique.
\end{defn}

We note that the class of chordal graphs includes trees. Also, in connected chordal graphs, the clique cover number is equal to the minimum vector rank \cite{hackney2009linearly}.

\begin{prop} Suppose that Alice and Bob have a set of mutually orthogonal product states, and let $G_B$ be the graph corresponding to Bob's side. If the graph $G = \overline{G}_B$ is chordal, then the states can be distinguished with one-way LOCC with Alice going first.
\end{prop}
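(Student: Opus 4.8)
The plan is to verify the three conditions of Theorem~\ref{thm: one-way LOCC graphs} directly, taking $G = \overline{G_B}$ itself. Since the states are mutually orthogonal product states, for each pair $i\neq j$ we have $\bk{\psi_i^A}{\psi_j^A}\,\bk{\psi_i^B}{\psi_j^B}=0$, which means $i\sim j$ in $G_A$ implies $i\not\sim j$ in $G_B$, i.e. $G_A\le\overline{G_B}=G$. So condition (1) holds with the choice $G=\overline{G_B}$, and there is nothing to do there. We may also assume, after replacing $\H_A$ by the span of Alice's vectors, that $\{\phi(v)\}$ spans $\H_A$, as the theorem requires.

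For condition (2), we need a clique cover of $G=\overline{G_B}$. Here is where chordality enters. First I would reduce to the connected case: a disjoint union of chordal graphs is chordal, and a clique cover / subspace decomposition for the whole graph is obtained by taking the (orthogonal) direct sum of the ones for each component, so it suffices to treat each connected component of $G$ separately (with $\H_A$ replaced by the span of the corresponding $\phi(v)$'s). For a connected chordal graph, the remark recalled just before the proposition tells us that $\mathrm{cc}(G)=\mvr(G)$, and since $\phi$ restricted to that component is an orthogonal representation of it into its span, we get $\mvr(G)\le\dim(\text{that span})$. The point is that $\mathrm{cc}(G)$ is small enough to fit inside the dimension available, which is exactly what is needed to have a hope of building the subspace decomposition in~(3).

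The main work, and the step I expect to be the real obstacle, is condition~(3): producing a direct-sum decomposition $\H_A=\oplus_{j=1}^k\S_j$ (on each component) such that the support of $\phi(v)$ lies in $\oplus_{\{j:\,v\in V_j\}}\S_j$ for every $v$. The natural route is to exploit the \emph{perfect elimination ordering} of a chordal graph: order the vertices $v_1,\dots,v_n$ so that each $v_m$ is simplicial in the induced subgraph on $\{v_1,\dots,v_m\}$. Processing vertices in this order, one builds the maximal cliques incrementally, and at each stage the newly completed clique is spanned by a controlled set of the $\phi(v)$'s; I would assign to each maximal clique $V_j$ a subspace $\S_j$ by peeling off, in this order, the part of $\operatorname{span}\{\phi(v):v\in V_j\}$ not already accounted for by earlier cliques — exactly mimicking the Gram--Schmidt-style construction $\S_j=\mathcal R_j\cap\bigcap_{i<j}\mathcal R_i^\perp$ from the proof of Theorem~\ref{thm: one-way LOCC graphs}, but now with the $\mathcal R_j$ coming from the clique structure rather than from a given measurement. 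One must check that these $\S_j$ are mutually orthogonal, that they sum to all of $\H_A$ (using that $\{\phi(v)\}$ spans and that the cliques cover every vertex), that the count of nonzero $\S_j$ does not exceed $\dim\H_A$ (this is where $\mathrm{cc}=\mvr$ is used — a minimum clique cover has exactly $\mvr(G)\le\dim\H_A$ cliques), and, most delicately, that $\phi(v)$ has support only in those $\S_j$ with $v\in V_j$. This last containment is the crux: it should follow because whenever $v\notin V_j$, the vector $\phi(v)$ is orthogonal to enough of the vectors defining $\S_j$ — here one invokes the biconditional in the definition of orthogonal representation together with the chordal/perfect-elimination structure ensuring that membership in the right cliques captures all the non-orthogonality of $\phi(v)$. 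Once (1)--(3) are in place, Theorem~\ref{thm: one-way LOCC graphs} immediately gives one-way LOCC distinguishability with Alice first, completing the proof.
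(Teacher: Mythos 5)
Your reduction to the three conditions of Theorem~\ref{thm: one-way LOCC graphs}, with $G=\overline{G_B}$, is the right frame, and you correctly isolate condition (3) as the crux; but the construction you sketch for it does not work. You propose to set $\mathcal{R}_j=\spn\{\phi(v):v\in V_j\}$ for the maximal cliques $V_j$ and then take $\mathcal{S}_j=\mathcal{R}_j\cap\bigcap_{i<j}\mathcal{R}_i^{\perp}$. Already for the path $P_3$ on vertices $a\sim b\sim c$ with $\phi(a)=\ket{0}$, $\phi(b)=\ket{0}+\ket{1}$, $\phi(c)=\ket{1}$ in $\C^2$, the first maximal clique $\{a,b\}$ gives $\mathcal{R}_1=\spn\{\ket{0},\ket{0}+\ket{1}\}=\C^2$, hence $\mathcal{S}_1=\C^2$ and $\mathcal{S}_2=\{0\}$; then $\phi(c)$ has support in $\mathcal{S}_1$ although $c\notin V_1$, so condition (3) fails. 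The problem is structural: for $v\notin V_j$ there is no reason for $\phi(v)$ to be orthogonal to the span of the clique $V_j$ --- adjacency in $G$ is exactly non-orthogonality, and $v$ can be adjacent to members of $V_j$ without belonging to it --- and intersecting with the earlier $\mathcal{R}_i^{\perp}$ does not repair this.

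The fix, which is what the paper does (following the OS-set construction of Hackney et al.), is to build the subspaces not from the spans of the cliques but from the orthocomplements of the spans of the \emph{non-neighbours} of simplicial vertices. One iterates: at stage $j$, take a simplicial vertex $v_j$ of the (chordal) induced subgraph on the remaining vertices $V_j$, set $\mathcal{K}_j=\spn\{\phi(v):v\in V_j,\ v\not\sim v_j\}$ and $\mathcal{S}_j=\mathcal{K}_j^{\perp}\cap\bigcap_{i<j}\mathcal{S}_i^{\perp}$; the clique attached to $\mathcal{S}_j$ is the closed neighbourhood of $v_j$. With this orientation the needed implication goes the right way: if $\phi(u)$ has a nonzero component in $\mathcal{S}_j\subseteq\mathcal{K}_j^{\perp}$, then $u$ must lie in the closed neighbourhood of $v_j$, and simpliciality of $v_j$ guarantees that this neighbourhood is a clique of $\overline{G_B}$. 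In the $P_3$ example this yields $\mathcal{S}_1=\spn\{\ket{0}\}$ and $\mathcal{S}_2=\spn\{\ket{1}\}$, i.e.\ the standard-basis measurement. Your handling of condition (1) and the reduction to connected components are fine, and the appeal to $\mathrm{cc}=\mvr$ is not actually needed: the iteration produces mutually orthogonal nonzero subspaces, so their number is automatically at most $\dim\H_A$.
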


\begin{proof}
The proof of the proposition uses the fact that every chordal graph has a simplicial vertex and that every induced subgraph of a chordal graph is chordal.  This implies the existence of a perfect elimination ordering  of the vertices \cite{rose1970triangulated}, which was used by  Hackney, et al. \cite{hackney2009linearly} to construct an OS-set for chordal graphs. We model our proof on their algorithmic set-up, constructing a direct sum decomposition corresponding to Alice's measurement as in the proof of Theorem~\ref{thm: one-way LOCC graphs}. The algorithm is outlined below. As before, we let $\phi: V_A \rightarrow \H_A$  be the association of vertices with  Alice's states.

\begin{itemize}
\item{} Initialize $j =1$ and put $V_1 = V$.
While $V_{j} \ne \emptyset$, do the following:
\item{} Let $G_{j}$ be the induced subgraph of $G$ on vertices $V_{j}$. $G_{j}$ is an induced subgraph of a chordal graph, so it is chordal.
\item{} Since $G_j$ is chordal, it has a simplicial vertex, which we call $v_j$.
\item{} Define ${\mathcal K}_j = \spn  \{ \phi(v):  v \in V_j, v \not\sim v_j\}$ to be the span of the nonneighbours of $v_j$ in $G_j$.
\item{} If $j= 1$, set ${\mathcal S}_1 =  {\mathcal K}_1^\perp$. Otherwise, set ${\mathcal S}_j =  {\mathcal K}_j^\perp \cap \left( \bigcap_{i = 1}^{j-1} {\mathcal S}_i^\perp\right)$ as in the proof of Theorem \ref{thm: one-way LOCC graphs}.  Since $\phi(v_j) \in  {\mathcal K}_j^\perp$ and $\phi(v_j)$ has a nonzero component in $\left( \oplus_{i = 1}^{j-1} {\mathcal S}_i \right)^\perp$, this space is non-trivial.
\item{} Let $V_{j+1}$ be the set of vertices  $v \in V$ such that $\phi(v)$ has a nonzero component in $\left( \oplus_{i = 1}^j {\mathcal S}_i \right)^\perp$.
\item{} Increase $j$ and iterate.
\end{itemize}
When the process terminates, $V_{j} = \emptyset$ and  $\phi(v) \in \oplus_{i = 1}^j {\mathcal S}_i$ for all $v \in V$. Hence, \bee \oplus_{i = 1}^j {\mathcal S}_i  = \spn\{  \phi(v):  v \in V_j, v \not\sim v_j\} = \H_A . \eee

If two vertices $u$ and $v$  have $\phi(u), \phi(v)$ each with  nonzero components in ${\mathcal S}_i$, then both are neighbours of $v_i$ in $G_i$. Since $v_i$ is simplicial in $G_i$, $u \sim v$ in $G_i$, implying that $u \sim v$ in $G$.

This implies that we have constructed a measurement for Alice that will enable Bob to distinguish his states.
\end{proof}

Consider the following example to help illustrate this proof.

\begin{exa}
{\rm
Let $\phi: V_A \rightarrow \H_A$  be the association of vertices with  Alice's states in $\mathbb{C}^{4}$ given as follows:
\begin{align*}
\phi( v_{1}) &= \ket{0} \\
\phi( v_{2}) &= \ket{1} \\
\phi( v_{3}) &= \ket{0} + \ket{2} + \ket{3} \\
\phi(v_{4})&= \ket{1} + \ket{3}  \\
\phi( v_{5}) &=  \ket{2}.
\end{align*}
Then Alice's graph $G_A$ is given as in Figure \ref{fig:test1}, and suppose that $\overline{G_B}$ is the graph shown in Figure  \ref{fig:test2}. Note that $\overline{G_B}$ is  chordal.

We have labeled the vertices to correspond to the order in which they will be selected by the algorithm.

\begin{figure}[H]
\begin{center}
\definecolor{ududff}{rgb}{0.30196078431372547,0.30196078431372547,1.}
\begin{tikzpicture}[line cap=round,line join=round,>=triangle 45,x=1.0cm,y=1.0cm]
\clip(1.94,0.96) rectangle (7.58,3.72);
\draw [line width=2.pt] (3.46,2.94)-- (4.,2.);
\draw [line width=2.pt] (3.,2.)-- (5.,2.);
\draw (2.56,2.) node[anchor=north west] {$v_{1}$};
\draw (6.32,2.04) node[anchor=north west] {$v_{2}$};
\draw (3.78,1.96) node[anchor=north west] {$v_{3}$};
\draw (4.8,1.98) node[anchor=north west] {$v_{4}$};
\draw (3.22,3.76) node[anchor=north west] {$ v_{5}$};
\draw [line width=2.pt] (5.,2.)-- (6.62,2.);
\draw [line width=2.pt] (3.,-1.)-- (4.54,-1.02);
\draw [line width=2.pt] (4.54,-1.02)-- (6.08,-1.02);
\begin{scriptsize}
\draw [fill=ududff] (3.,2.) circle (2.5pt);
\draw [fill=ududff] (3.46,2.94) circle (2.5pt);
\draw [fill=ududff] (4.,2.) circle (2.5pt);
\draw [fill=ududff] (5.,2.) circle (2.5pt);
\draw [fill=ududff] (6.62,2.) circle (2.5pt);
\draw [fill=ududff] (3.,-1.) circle (2.5pt);
\draw [fill=ududff] (4.54,-1.02) circle (2.5pt);
\draw [fill=ududff] (6.08,-1.02) circle (2.5pt);
\end{scriptsize}
\end{tikzpicture}
\end{center}
\caption{Alice's graph, $G_{A}$.}
\label{fig:test1}
\end{figure}
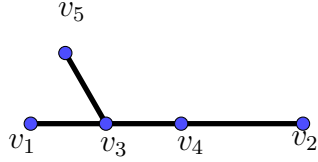

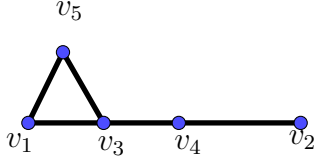
\begin{figure}[H]
\begin{center}
\definecolor{ududff}{rgb}{0.30196078431372547,0.30196078431372547,1.}
\begin{tikzpicture}[line cap=round,line join=round,>=triangle 45,x=1.0cm,y=1.0cm]
\clip(0.84,0.98) rectangle (7.52,4.62);
\draw [line width=2.pt] (3.,2.)-- (3.46,2.94);
\draw [line width=2.pt] (3.46,2.94)-- (4.,2.);
\draw [line width=2.pt] (3.,2.)-- (5.,2.);
\draw (2.56,2.) node[anchor=north west] {$v_{1}$};
\draw (6.32,2.04) node[anchor=north west] {$v_{2}$};
\draw (3.78,1.96) node[anchor=north west] {$v_{3}$};
\draw (4.8,1.98) node[anchor=north west] {$v_{4}$};
\draw (3.22,3.76) node[anchor=north west] {$ v_{5}$};
\draw [line width=2.pt] (5.,2.)-- (6.62,2.);
\draw [line width=2.pt] (3.,-1.)-- (4.54,-1.02);
\draw [line width=2.pt] (4.54,-1.02)-- (6.08,-1.02);
\begin{scriptsize}
\draw [fill=ududff] (3.,2.) circle (2.5pt);
\draw [fill=ududff] (3.46,2.94) circle (2.5pt);
\draw [fill=ududff] (4.,2.) circle (2.5pt);
\draw [fill=ududff] (5.,2.) circle (2.5pt);
\draw [fill=ududff] (6.62,2.) circle (2.5pt);
\draw [fill=ududff] (3.,-1.) circle (2.5pt);
\draw [fill=ududff] (4.54,-1.02) circle (2.5pt);
\draw [fill=ududff] (6.08,-1.02) circle (2.5pt);
\end{scriptsize}
\end{tikzpicture}
\end{center}
  \caption{Complement of Bob's graph, $\overline{G_{B}}$.}
  \label{fig:test2}
\end{figure}


Now follow the algorithm above:
\begin{itemize}
\item  Iteration $j = 1$: $V_{1} = V \ne \emptyset$.

$G_1 = G  = \overline{G_B}$, and $G_1$ has three simplicial vertices, $v_1, v_2$, and $v_5$. We choose our first vertex $v_1$. The non-neighbours of $v_1$ in $G_1$ are $v_2$ and $v_4$. So $K_{1} = \text{span} \lbrace \phi( v_{2}),\phi( v_{4} )  \rbrace$. Set $S_{1} = K_{1}^{\perp} = \spn{\{\ket{0}, \ket{2}\}}$.
Then $S_{1}^{\perp} =  \spn{\{\ket{1}, \ket{3}\}}$, and $V_2$ is the set of vertices such that $\phi(v_j)$ have a nonzero component in $S_{1}^{\perp}$, so $V_2 = \{ v_2,v_3,v_4 \}$.

\item{} Iteration $j = 2$: $V_2 \ne \emptyset$, so we continue.

$G_2$ is the induced graph on $V_2$, which is simply the path shown in the figures on these vertices. The simplicial vertices are the leaves $v_2$ and $v_3$. We choose our second vertex $v_2$.  The non-neighbour of $v_2$ in $G_2$ is $v_3$,  so $K_{2} = \text{span} \lbrace \phi( v_{3}) \rbrace$. Set  $S_2 = K_{2}^{\perp} \cap S_1^\perp = \spn{\{\ket{1}\}}$, and then observe that
$(S_{1} \oplus S_2)^{\perp} =  \spn{\{\ket{3}\}}$, so $V_3 = \{v_3, v_4\}$.
\item{} Iteration $j = 3$: $V_3 \ne \emptyset$, so we continue.

$G_3$ is simply the edge $\{v_3,v_4\}$. Both vertices are simplicial, and we choose $v_3$. But $G_3$ is a complete graph, so $K_3 = \{0\}$. This means that $S_3 = \C^4 \cap S_1^\perp \cap S_2^\perp = \spn{\{\ket{3}\}}$.

Now we have $(S_1 \oplus S_2 \oplus S_3)^\perp =\{0\}$, so $V_4 = \emptyset$ and the process terminates on the next iteration.
\end{itemize}
This gives the decomposition $\H_A = \C^4 = \spn{\{\ket{0}, \ket{2}\}} \oplus \spn{\{\ket{1}\}} \oplus \spn{\{\ket{3}\}}$. The projections onto these three subspaces give a clique cover of $\overline{G_B}: \{ \{v_1,v_5,v_3\}, \{v_2, v_4\}, \{v_3,v_4\}\}$; and the decomposition of $\H_A$ tells Alice how to make her measurement with respect to this clique cover.
}
\end{exa}

An argument similar to that used in the proposition above can be used to extend the result to graphs that are $k$-trees, as defined below.
We first introduce the  concept of perfect elimination orderings which we will use in our definition of a $k$-tree.

\begin{defn} A perfect elimination ordering of a graph $G$ is an ordering $\{v_1,v_2,...,v_n\}$ of the vertices of $G$ with the property that for all $i$, $v_i$ is a simplicial vertex in the subgraph of $G$ induced by the vertices $\{v_i,v_{i+1},...,v_{n-1},v_n\}$. \end{defn}

It was first shown in \cite{rose1970triangulated} that a graph has a perfect elimination ordering if and only if it is a chordal graph.

We now introduce the class of graphs known as $k$-trees first defined in \cite{beineke1969number} which have been studied extensively, see for example \cite{alinaghipour2014relationship,rose1970triangulated,arnborg1989linear}.

\begin{defn}\label{vardef} A graph $G$ is a $k$-tree if and only if it has a perfect elimination ordering $\{v_1,v_2,...,v_n\}$ with two added properties:

\begin{enumerate}

\item For all $i<n-k$, $v_i$ has degree $k$ in the subgraph of $G$ induced by the vertices $\{v_i,v_{i+1},...,v_{n-1},v_n\}.$

\item The subgraph of $G$ induced by vertices
$
\{v_{n-k},v_{n-k+1},...,v_{n-1},v_n\}
$
is a $k+1$ clique.

\end{enumerate}
$G$ is a partial $k$-tree if it is a subgraph of a $k$-tree.
\end{defn}

It is clear from this definition that every $k$-tree is chordal and that every vertex in a $k$-tree has degree at least $k$.



\begin{prop} Suppose that Alice and Bob have a set of mutually orthogonal product states, and let $G_B$ be the graph corresponding to Bob's side. If there exists a graph $G$ such that $G_A \le G \le \overline{G_B}$ and $G$ is a $k$-tree, then the states can be distinguished with one-way LOCC with Alice going first.
\end{prop}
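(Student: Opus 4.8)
The plan is to re-run the elimination algorithm from the proof of the preceding proposition, but driven by the $k$-tree $G$ in place of $\overline{G_B}$, and then to feed its output into Theorem~\ref{thm: one-way LOCC graphs}. The only feature of $\overline{G_B}$ that the earlier proof used is chordality (every induced subgraph has a simplicial vertex), and every $k$-tree is chordal; the role previously played by the identity $G = \overline{G_B}$ is now split between the two inclusions $G_A \le G$ and $G \le \overline{G_B}$ (note that $G_A \le \overline{G_B}$ already encodes the mutual orthogonality of the states). As in Theorem~\ref{thm: one-way LOCC graphs} we may assume that $\{\phi(v) : v \in V\}$ spans $\H_A$.

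First I would carry out the construction. Put $U_1 = V$, and while $U_j \ne \emptyset$ let $G_j = G[U_j]$ (chordal, being induced in $G$), choose a simplicial vertex $v_j$ of $G_j$, set $\mathcal K_j = \spn\{\phi(v) : v \in U_j,\ v \not\sim v_j \text{ in } G_j\}$, put $\mathcal S_1 = \mathcal K_1^\perp$ and $\mathcal S_j = \mathcal K_j^\perp \cap \big(\bigcap_{i<j}\mathcal S_i^\perp\big)$ for $j>1$, and let $U_{j+1} = \{v : \phi(v)\text{ has a nonzero component in }(\oplus_{i\le j}\mathcal S_i)^\perp\}$. The inclusion $G_A \le G$ is exactly what keeps this running: a non-neighbour $v$ of $v_j$ in $G_j$ is a non-neighbour of $v_j$ in $G$, hence in $G_A$, hence $\langle\phi(v),\phi(v_j)\rangle = 0$, so $\phi(v_j) \in \mathcal K_j^\perp$; together with $v_j \in U_j$ this gives, exactly as in the preceding proof, that every $\mathcal S_j$ is nontrivial, that the process halts, and that $\H_A = \oplus_j \mathcal S_j$ with the support of each $\phi(v)$ contained in $\oplus_{\{j :\, Q_j\phi(v)\ne 0\}}\mathcal S_j$, where $Q_j$ is the orthogonal projection onto $\mathcal S_j$.

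Next I would package this for Theorem~\ref{thm: one-way LOCC graphs}. With $W_j = \{v : Q_j\phi(v) \ne 0\}$, the simpliciality of $v_j$ in $G_j$ forces each $W_j$ to induce a clique of $G_j$, hence of $G$; and here the inclusion $G \le \overline{G_B}$ enters, making every $W_j$ a clique of $\overline{G_B}$ as well, i.e.\ the states indexed by $W_j$ are mutually orthogonal on Bob's side. Let $G'$ be the union of these cliques. Since $\sum_j Q_j = I$, every edge $\{u,v\}$ of $G_A$ has $\bra{\phi(u)} Q_j \ket{\phi(v)} \ne 0$ for some $j$, so $u,v \in W_j$ and $\{u,v\}\in E(G')$; thus $G_A \le G' \le \overline{G_B}$, the family $\{W_j\}$ is a clique cover of $G'$, and $\H_A = \oplus_j \mathcal S_j$ satisfies hypothesis~(3). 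Theorem~\ref{thm: one-way LOCC graphs} then gives one-way LOCC distinguishability with Alice measuring first.

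I expect the one genuinely delicate point to be, as in the preceding proposition, verifying that the $\mathcal S_j$ are nontrivial and exhaust $\H_A$; this rests on the interplay between the orthogonality pattern of $\phi$ (which records $G_A$, not $G$) and the elimination ordering on $G$, and it is precisely for this interplay that both inclusions $G_A \le G \le \overline{G_B}$ are needed — $G_A \le G$ to drive the construction and $G \le \overline{G_B}$ to certify the resulting cliques for Bob. The $k$-tree hypothesis is used only through chordality, so in fact the same argument works for any chordal $G$ with $G_A \le G \le \overline{G_B}$.
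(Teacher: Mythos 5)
Your argument is correct in substance but takes a genuinely different route from the paper's. The paper proves this proposition by induction on the number of vertices of the $k$-tree $G$: Alice performs the binary measurement $\{\kb{\psi}{\psi}, I-\kb{\psi}{\psi}\}$ with $\ket{\psi}=\phi(v_1)$, where $v_1$ is the first vertex of the perfect elimination ordering; the first outcome confines the unknown state to the closed neighbourhood of $v_1$, a clique of $G\le\overline{G_B}$ that Bob can resolve alone, while the second outcome replaces $\phi$ by $\phi'=(I-\kb{\psi}{\psi})\phi$ and recurses on the $k$-tree $G-v_1$. Crucially, that induction uses more than chordality: the degree condition in Definition~\ref{vardef} is invoked precisely to guarantee that no remaining $\phi(u)$ is annihilated by the projection, so that $\phi'$ is still a nowhere-zero orthogonal representation and the recursion can continue. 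You instead rerun the simplicial-elimination algorithm of the preceding chordal proposition with the sandwiched graph $G$ in place of $\overline{G_B}$ and feed its output into Theorem~\ref{thm: one-way LOCC graphs}, correctly splitting the two roles that the identity $G=\overline{G_B}$ played there between the inclusions $G_A\le G$ (which makes $\phi(v_j)$ orthogonal to the span of its non-neighbours, driving the construction) and $G\le\overline{G_B}$ (which certifies that each $W_j$ is a clique Bob can resolve). Your route buys a strictly stronger statement --- any chordal $G$ with $G_A\le G\le\overline{G_B}$ suffices, subsuming both the chordal proposition and this one since every $k$-tree is chordal --- together with an explicit projective measurement and clique cover in the exact form demanded by Theorem~\ref{thm: one-way LOCC graphs}; the paper's route buys independence from the elimination machinery and a concrete adaptive protocol.

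One caution: the delicate step you flag is real and is inherited verbatim from the chordal proposition, namely the claim that each $\mathcal S_j=\mathcal K_j^\perp\cap\bigcap_{i<j}\mathcal S_i^\perp$ is nonzero so that the algorithm makes progress. The facts that $\phi(v_j)\in\mathcal K_j^\perp$ and that $\phi(v_j)$ has a nonzero component in $\big(\oplus_{i<j}\mathcal S_i\big)^\perp$ do not by themselves force that intersection to be nontrivial, so whatever argument closes this point in the chordal case must be reproduced here; your generalization introduces no new difficulty at that step, since only $G_A\le G$ enters, but you should spell the termination argument out rather than citing it as ``exactly as in the preceding proof.''
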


\begin{proof}
We prove this by induction on the number of vertices in $G$. If $G$ has $k+1$ vertices, then $G = K_{k+1}$ and Alice's measurement is irrelevant; Bob can distinguish the states by himself.

Now suppose that the result is true for all $k$-trees with fewer than $n$ vertices and let $G$ be a  $k$-tree on $n$ vertices. Let $v_1$ be the vertex in the perfect elimination ordering from Definition~\ref{vardef}.
Let $\ket{\psi} = \phi(v_1)$, recalling that $\phi$ is an orthogonal representation of $G$, and let Alice measure with $\{ \kb{\psi}{\psi}, I -\kb{\psi}{\psi} \}$. If Alice gets the first outcome, then she knows that the state is in the closed neighbourhood of $v_1$. These are states that can be distinguished by Bob.

If Alice gets the second outcome, her state is now in the state $\phi'(v) = \left(I -\kb{\psi}{\psi} \right)\phi(v)$ for some $v$. Let $G' = G-v$ be the induced subgraph of $G$ formed by deleting $v$.  Let $u$ be any neighbour of $v_1$. Then $u$ has  at least $k$ other neighbours in $G$, which means that there exists a vertex in $V$ that is adjacent to $u$ but not $v_1$. This implies that $\phi(u) \ne \phi(v)$, which implies that $\phi'(u) \ne 0$. Hence $\phi'$ is an orthogonal representation of $G'$.  But $G'$ is a $k$-tree by definition, and so by our inductive assumption, Alice can complete her measurement to put Bob in a position to determine their state.
\end{proof}

\section{The Graph of a Domino State Diagram}

The original quantum `domino' states were an orthonormal basis of $\mathbb{C}^3 \otimes \mathbb{C}^3$ consisting entirely of product states.  They were constructed in \cite{bennett1999quantum} as an example of a set of orthogonal product states that cannot be distinguished by LOCC.  A domino diagram was constructed in \cite{bennett1999quantum} to help readers better picture this construction, wherein the orthogonal basis states are represented by dominos and laid on a grid similar to a chessboard (more details are given below).  This has motivated generalizations of domino states in larger Hilbert spaces; examples of these can be found in \cite{cohen2017general,zhang2014nonlocality,zuo2018new}. We will focus on sets of domino states that form a complete product basis of $\mathbb{C}^m \otimes \mathbb{C}^n$; but one can also consider subsets of these bases, such as in Example \ref{Ex: Redundant Orthogonality Qutrits}.

We will define the set of generalized domino states on $\mathbb{C}^m \otimes \mathbb{C}^n$ in terms of the associated domino diagram.  We define a domino diagram to be a partition of the $m \times n$ rectangular chessboard into a set of generalized dominos.  The generalized dominos each have positive integer length and width one and are placed either horizontally or vertically on the chessboard aligned to the $m \times n$ grid. We assume that the chessboard is on a torus, so that dominos which exit off an edge simply continue on the other side.

Label the rows on the chessboard $0,1,2,...,m-1$ from top to bottom and the columns $0,1,2,...,n-1$. This identifies each square on the board with an element of ${\mathbb Z}_m \times {\mathbb Z}_n$. We can then construct a bijection from ${\mathbb Z}_m \times {\mathbb Z}_n$ to a set of generalized domino states in $\C^m \ot \C^n$ as follows:
If there is a horizontal domino on row $r$ whose endpoints are the $(r,b+1)$ and $(r,b+s)$ squares, then the point $(r,b+j)$ with $1\le j\le s$ gets mapped to $\sum_{k=1}^s \alpha_r^k \omega^{jk}\ket{r}\ket{b+k}$ where $\omega$ is the primitive $s$th root of unity and $\alpha_r \ne 0$ is an arbitrary phase associated with row $r$.  If there is a vertical domino on column $c$ whose endpoints are the $(b+1,c)$ and $(b+s,c)$ squares, then the $(b+j,c)$ square with $1\le j\le s$ gets mapped to $\sum_{k=1}^s \beta_c^k \omega^{jk}\ket{b+k}\ket{c}$ where $\omega$ is again the primitive $s$th root of unity and now $\beta_c$ is an arbitrary phase associated with column $c$.

We can obtain the graphs $G_A$ and $G_B$ of a set of generalized domino states directly from its associated domino diagram.

\begin{defn} Let $\mathcal{D}$ be an $m \times n$ domino diagram.  Then the row (respectively column) graph of $\mathcal{D}$ is the graph whose vertex set is the set of $mn$ unit squares on the rectangular chessboard. Two distinct squares are adjacent if and only if one or both of the following two conditions are satisfied:

\begin{enumerate}
\item The two squares lie in the same row (respectively column) on the chessboard.
\item The two squares each lie in two different dominoes $D_1$ and $D_2$ and there exists at least one row (respectively column) of $\mathcal{D}$ which intersects both $D_1$ and $D_2$.
\end{enumerate}  \end{defn}



It is an easy exercise to show that the row and column graphs of a domino diagram are the graphs $G_A$ and $G_B$, respectively, of the set of generalized domino states corresponding to the domino diagram.  Since $G_A$ and $G_B$ can never have an edge in common, $G_A\le \overline{G_B}$ and $G_B\le \overline{G_A}$.  We get equality ($G_A= \overline{G_B}$ and $G_B=\overline{G_A}$ ) if and only if any two dominoes in the diagram have either a common row or a common column that intersects them.








The row and column graphs of domino diagrams have a  nice structure that allows us to bound the clique cover number from below:

\begin{prop}\label{dominoprop} Let $G_A$ and $G_B$ be the row and column graphs of an $m \times n$ domino diagram.  Then $\mathrm{cc}(\overline{G_A})\ge n$ and $\mathrm{cc}(\overline{G_B})\ge m$.

If in addition we know that $G_A = \overline{G_B}$, then $\mathrm{cc}(\overline{G_B}) \ge m-v + v^2$ and $\mathrm{cc}(\overline{G_A}) \ge n-h + h^2$, where $v$ and $h$ are the lengths of the longest vertical and horizontal dominoes in the diagram.
\end{prop}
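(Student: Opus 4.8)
My plan splits the two claims. For the first, I would invoke the elementary bound $\mathrm{cc}(H)\ge\alpha(H)$, the independence number, valid for any graph $H$: in a clique cover each clique meets a fixed independent set in at most one vertex, and every vertex must be covered, so at least $\alpha(H)$ cliques are needed. Since $\alpha(\overline{G_A})$ is the largest clique size in $G_A$ and every full row of the $m\times n$ chessboard is a clique of size $n$ in the row graph $G_A$ (condition~(1) of the row graph definition), we get $\mathrm{cc}(\overline{G_A})\ge n$; dually every full column is a clique of size $m$ in the column graph $G_B$, giving $\mathrm{cc}(\overline{G_B})\ge m$. This requires nothing about how $G_A$ and $G_B$ are related. (I assume throughout that $m,n\ge 2$.)

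For the refined bounds, assume $G_A=\overline{G_B}$, so that $\mathrm{cc}(\overline{G_B})=\mathrm{cc}(G_A)$ and it suffices to show $\mathrm{cc}(G_A)\ge m-v+v^{2}$; the companion estimate $\mathrm{cc}(\overline{G_A})=\mathrm{cc}(G_B)\ge n-h+h^{2}$ will follow by transposing the entire argument, interchanging rows with columns and hence $v$ with $h$. Fix a vertical domino $D$ of maximal length $v$, in a column $c_0$, occupying a set $R$ of $v$ rows. The $m$ squares of column $c_0$ pairwise share that column, so they form a clique of $G_B$; since $G_B\le\overline{G_A}$ always, they form an independent set $W$ of $G_A$. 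The tool I would set up is a ``separated covering'' principle: if for each $w\in W$ we pick a nonempty set $Y_w$ of $G_A$-neighbours of $w$ such that every family of cliques of $G_A$ covering the edges $\{w,y\}$ ($y\in Y_w$) has size at least $\mu_w$, then $\mathrm{cc}(G_A)\ge\sum_{w\in W}\mu_w$. This holds because a clique of $G_A$ containing edges at two vertices $w,w'\in W$ would contain both $w$ and $w'$, forcing $w=w'$ since $W$ is independent; hence the subfamilies of a clique cover that ``serve'' distinct vertices of $W$ are disjoint, and the subfamily serving $w$ covers all edges $\{w,y\}$, so has at least $\mu_w$ members.

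It then remains to choose the sets $Y_w$. For each of the $v$ squares $w$ of $D$, take $Y_w=S:=\{(\rho,c'):\rho\in R,\ c'\neq c_0\}$. Directly from the row-graph definition, each square of $D$ is $G_A$-adjacent to every square of $S$: a square $(\rho,c')\in S$ lies in a domino $D'\neq D$ (because $c'\neq c_0$) which meets row $\rho\in R$, and $D$ meets row $\rho$ as well, so condition~(2) applies --- and when $\rho$ is the row of $w$, condition~(1) applies. Any clique of $G_A$ containing $w$ meets $S$ in a clique of the induced subgraph $G_A[S]$, so covering the edges $\{w,s\}$, $s\in S$, forces a clique cover of the vertex set $S$ within $G_A[S]$, of size at least $\alpha(G_A[S])$; and $\alpha(G_A[S])\ge v$, since for a fixed $c'\neq c_0$ the $v$ squares $\{(\rho,c'):\rho\in R\}\subseteq S$ share column $c'$, hence form a clique of $G_B$ and so an independent set of $G_A$. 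Thus $\mu_w\ge v$ for each of the $v$ squares of $D$. For each of the remaining $m-v$ squares $w$ of column $c_0$, let $Y_w$ be a single $G_A$-neighbour (it shares its row with $n-1\ge1$ others), giving $\mu_w\ge1$. Summing over $W$ yields $\mathrm{cc}(G_A)\ge v\cdot v+(m-v)\cdot 1=m-v+v^{2}$, which is the assertion.

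The first part is routine; the real point, and the main obstacle, is recognizing that $\mathrm{cc}(G_A)\ge\alpha(G_A)$ only gives the weaker bound $m$, and that the extra $v^{2}-v$ must be squeezed out of the observation that each of the $v$ mutually non-adjacent squares of $D$ \emph{individually} forces $v$ cliques of the cover, because its $G_A$-neighbourhood already contains an independent set of size $v$ (a column of $S$). Formulating and proving the separated-covering principle, and checking from the domino diagram that every square of $D$ is $G_A$-adjacent to all of $S$, are where the care lies; the hypothesis $G_A=\overline{G_B}$ enters only to transfer the estimate from $G_A$ to $\overline{G_B}$, the bound on $\mathrm{cc}(G_A)$ being valid for any diagram.
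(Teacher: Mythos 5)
Your proposal is correct, but the second half takes a genuinely different route from the paper's. For the refined bound the paper passes to the induced subgraph $H$ of $\overline{G_A}$ on just two rows of the board: each row is a clique of $G_A$, hence independent in $\overline{G_A}$, so $H$ is bipartite and therefore triangle-free, whence $\mathrm{cc}(H)$ is at least the number of edges of $H$; the bound $n-h+h^2$ then drops out of a degree count ($h$ vertices of degree at least $h$ coming from the long horizontal domino, and $n-h$ vertices of degree at least $1$ coming from the same-column edges supplied by $G_B \le \overline{G_A}$), combined with the monotonicity of $\mathrm{cc}$ under passing to induced subgraphs. You instead anchor the count on a full column, an independent set $W$ of $G_A$, and prove a ``separated covering'' lemma: the cliques of a cover that handle edges at distinct vertices of $W$ are necessarily disjoint families, and each square of the long vertical domino individually forces $v$ cliques because its $G_A$-neighbourhood contains an independent $v$-set (a column of your set $S$). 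I checked the adjacency claims against the row/column-graph definition and the counting against the independence-number bound, and everything goes through. The paper's argument is shorter because triangle-freeness converts $\mathrm{cc}$ directly into an edge count; yours is slightly more informative in that it bounds $\mathrm{cc}(G_A)$ unconditionally and invokes $G_A=\overline{G_B}$ only to identify this with $\mathrm{cc}(\overline{G_B})$ (in fact the paper's combinatorics likewise only uses the containment $G_B\le\overline{G_A}$, which always holds). Your standing assumption $m,n\ge 2$ is appropriate --- on a degenerate one-column board with a long vertical domino the stated inequality can fail --- and your first paragraph coincides with the paper's proof of the first bound.
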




\begin{proof}  Without loss of generality, assume that the largest horizontal domino lies in row 0, and recall the association of domino states with elements of $\mathbb{Z}_m \times \mathbb{Z}_n$. The $n$ states of the form $\{ (0,j) \}$ lie in the first row and thus form a clique in $G_A$ and an independent set in $\overline{G_A}$. Since the clique cover number of a graph is at least its independence number, we have $\mathrm{cc}(\overline{G_A})  \ge n$. This proves the first bound.

For the second bound, let $H$ be the induced subgraph of $\overline{G_A}$ on the $2n$ vertices $\{ (i,j): i \in \{0,1\} \}$. Because the set of states associated with each row form an independent set in  $\overline{G_A}$, the graph $H$ is bipartite. This implies that the clique cover number of $H$ is simply the number of edges in $H$, which is the sum of the degrees of the vertices in a single partition.

If we assume that $G_B = \overline{G_A}$, then each vertex $(0,j)$ is adjacent to $(1,j)$ in $H$, since they are in the same column. This implies that the degree of vertex $(0,j)$ is at least one. If there is a horizontal domino of length $h$ in row zero, then the degree of each of those corresponding vertices is at least $h$, since they are connected to each of the columns they appear in. This implies that if $G_B = \overline{G_A}$,
\bee
\mathrm{cc}(\overline{G_A}) \ge \mathrm{cc}(H) = \sum_{j} \deg(0,j) \ge h(h) + (n-h)(1) = n-h+h^2 .
\eee

The proofs bounding $\mathrm{cc}(\overline{G_B})$ are similar, and the result follows.
\end{proof}

\strut

{\noindent}{\it Acknowledgements.} D.W.K. was partly supported by NSERC and a University Research Chair at Guelph. C.M. was partly supported by Mitacs and the African Institute for Mathematical Sciences. R.P. was partly supported by NSERC. M.N. acknowledges the ongoing support of the Saint Mary's College Office of Faculty Development.

\bibliographystyle{plain}

\bibliography{MNBibfile}

\end{document}